\newcommand{\cN}{\mathcal{N}}
\newcommand{\cJ}{\mathcal{J}}
\newcommand{\cL}{\mathcal{L}}
\newcommand{\cT}{\mathcal{T}}
\newcommand{\beq}{\begin{equation}}
\newcommand{\eeq}{\end{equation}}
\DeclareFontFamily{U}{rsf}{}
\DeclareFontShape{U}{rsf}{m}{n}{
  <5> <6> rsfs5 <7> <8> <9> rsfs7 <10-> rsfs10}{}
\DeclareMathAlphabet\Scr{U}{rsf}{m}{n}
\def\CO#1#2{{[#1,#2]}}
\def\iden{{\mathbbm 1}}
\def\rep#1{{{\boldsymbol{#1}}}}
\def\brep#1{{{\overline{\boldsymbol{#1}}}}}
\def\C{{\mathbb C}}
\def\Sym{\operatorname{Sym}}
\def\GL{\operatorname{GL}}
\def\SU{\operatorname{SU}}
\def\GU{\operatorname{U{}}}
\def\Sp{\operatorname{Sp}}
\def\GE{\operatorname{E}}
\def\so{\operatorname{\mathfrak{so}}}
\def\sp{\operatorname{\mathfrak{sp}}}
\def\su{\operatorname{\mathfrak{su}}}
\def\Lu{\operatorname{\mathfrak{u}}}
\def\Le{\operatorname{\mathfrak{e}}}
\def\Lg{\operatorname{\mathfrak{g}}}
\def\LLh{\operatorname{\mathfrak{h}}}
\def\cJ{{\cal J}}
\def\cL{{\cal L}}
\def\cN{{\cal N}}
\def\cT{{\cal T}}
\newcommand\psit{\widetilde{\psi}}
\newcommand\qt{\tilde{q}}
\newcommand\Qt{\widetilde{Q}}
\newtheorem{theorem}{Theorem}
\newtheorem{lemma}{Lemma}
\def\tadj{\operatorname{adj}}
\def\LCg{{{\mathfrak{C}_{\Lg}}}}
\newcommand{\bQ}{{\bf Q}}
\newcommand{\bM}{{\bf M}}
\newcommand{\uu}{{\mathfrak u}}
\newcommand{\eu}[1]{{\mathfrak #1}}
\title{Global Symmetries and ${\cal N}=2$ SUSY}
\author{Jock McOrist${}^1$, Ilarion V.~Melnikov${}^2$, and Brian Wecht${}^3$\\

{\tt j.mcorist@surrey.ac.uk, ilarion@physics.tamu.edu, b.wecht@qmul.ac.uk}\\
${}^1$ Department of Mathematics, University of Surrey \\ 
Guildford, Surrey, GU2 7XH, UK \\
${}^2$ George P. and Cynthia W. Mitchell Institute for Fundamental Physics and Astronomy,
Texas A\&M University \\
College Station, TX 77843, USA \\
${}^3$ Centre for Research in String Theory, Queen Mary University of London \\ 
London E1 4NS, UK \\
}
\abstract{We prove that $\cN=2$ theories that arise by taking $n$ free hypermultiplets and gauging a subgroup of $\Sp(n)$, the non-R global symmetry of the free theory, have a remaining global symmetry which is a direct sum of unitary, symplectic, and special orthogonal factors. This implies that theories that have $\SU(N)$ but not $\GU(N)$ global symmetries, such as Gaiotto's $T_N$ theories, are not likely to arise as IR fixed points of RG flows from weakly coupled ${\cN=2}$ gauge theories.}
\preprint{DMUS-MP-13/20 \\ MIFPA 13-34 \\ QMUL-PH-13-13}
\begin{document}
\section{Introduction}
\label{sec:intro}

Classifying the different possible phases of quantum field theories has been a longstanding goal of high energy theoretical physics, and understanding and constraining the symmetries that arise in particular realizations is a key tool in this effort.  In some cases, such as in two dimensions, there has been a significant amount of progress in this direction, {\it e.g.,} the known restriction of unitary conformal field theories (CFTs) with $c<1$ to the minimal models, where the chiral algebra essentially fixes the theories. In four dimensions, however, significantly less is known, even in the case of CFTs.

It has long been known that it is possible to engineer four-dimensional CFTs which do not obviously have any free-field limit. An early class of examples are the $\cN=2$ SCFTs found by Minahan-Nemeschansky \cite{Minahan:1996cj,Minahan:1996fg}. These theories have $\GE_{6,7,8}$ global symmetries, and can be studied via the Seiberg-Witten curve \cite{Seiberg:1994rs, Seiberg:1994aj} and the powerful techniques available in $\cN=2$ theories. Although much is known about these theories, including the dimensions of various operators, 't Hooft anomalies, and even some chiral ring relations \cite{Gaiotto:2008nz}, there is no known way of directly constructing the theories via an asymptotically free UV theory.\footnote{It is worth noting that these theories, albeit with certain global symmetries gauged, can be realized via Argyres-Seiberg duality and generalizations \cite{Argyres:2007cn, Argyres:2007tq}. However, much like in the case of Argyres-Douglas theories \cite{Argyres:1995jj}, there is not a straightforward mapping between the weakly coupled degrees of freedom and those of the ungauged $\GE_n$ theories. } Shortly after the discovery of Argyres-Seiberg duality, it was realized  \cite{Gaiotto:2009we} that the Minahan-Nemeschansky CFTs are in fact special cases of a much broader class of $\cN=2$ theories that come from wrapping M5-branes on a three-punctured sphere. The $\GE_6$ theory is a special case of Gaiotto's $T_N$ theories \cite{Gaiotto:2009we}, and $\GE_{7,8}$ are special cases that emerge when allowing more general punctures on the sphere \cite{Tachikawa:2009rb,Chacaltana:2010ks,Chacaltana:2011ze}. For all but a few very special cases, which are free theories, these theories do not have known UV Lagrangian descriptions.   Needless to say, such a description could be of great use---for instance, one could apply powerful localization techniques to constrain and perhaps fix the chiral ring structure of a given theory.  This leads to a natural question:  are there theories for which we can rule out the existence of a useful Lagrangian formulation?\footnote{By utility we mean that the connection between UV and IR physics is relatively simple, ideally without the complications of a strong coupling limit or accidental symmetries.}

Despite the lack of a Lagrangian description, it is still possible to do detailed calculations in these theories. This is because for many quantities of interest, knowing information about the global symmetries such as the leading behavior of current two- and three-point OPEs is sufficient, and global symmetry currents are among the limited set of operators to which we have reliable access. Although useful in general, global symmetry information has proved particularly important for studying $\cN=1$ generalizations of the $T_N$ theories, as in \cite{Benini:2009mz} and subsequent work. This brings up the general question of what sorts of constraints follow from the global symmetries of these theories.

In this work we make the observation that these two questions, {\it i.e.} the constraints on possible symmetries and existence of a Lagrangian, have an interesting relation in the context of $\cN=2$ gauge theories. We will show that some (non-R) global symmetries, such as the $\SU(N)^3$ global symmetry possessed by Gaiotto's $T_N$ theories, are not straightforwardly realized by asymptotically free $\cN=2$ theories.  The essence of our argument is that such $\SU(N)$ symmetries are always accompanied by an additional $\GU(1)$ which enhances the symmetry to $\GU(N)$. Although we will not be able to completely rule out the possibility that the $T_N$ theory has a UV Lagrangian description, we will be able to place constraints on any gauge theory realization.  We will discuss these constraints and their limitations further in section~\ref{sec:disc}.

The main result of our paper is a proof that the global symmetries of certain $\cN=2$ gauge theories fall into a straightforward classification depending on the matter representation. Our starting point will be a theory of $n$ free hypermultiplets, which has a non-R global symmetry group $\Sp(n)$. We prove that after gauging a subalgebra $\Lg$ of the global symmetry algebra $\sp(n)$, the remaining global symmetry algebra is a direct sum of $\so, \sp$, and $\uu$ factors. In particular, we note that $\su$ factors without accompanying $\uu(1)$'s do not appear. This classification is certainly known to some experts (see for example \cite{Argyres:2007tq, Argyres:1996eh}), but we are not aware of a general proof in the literature. Our aim is to provide such a proof and explore some of the consequences.

\section{Symmetries of free fields}
\label{sec:free}

It is instructive to first understand the global symmetry of a theory of $n$ free hypermultiplets.  
In $\cN=1$ superspace a hypermultiplet consists of a chiral superfield $Q$ with propagating component fields $(q,\psi)$, and a chiral superfield $\Qt$ with components $(\qt,\psit)$. Requiring  $\cN=2$ supersymmetry implies there is a  $\GU(1)_{\text{R}} \times \SU(2)_{\text{R}}$ R-symmetry, 
under which $(q,\qt^\dag)$ transform as a doublet under $\SU(2)_{\text{R}}$, while the fermions are neutral. We parametrize the $\SU(2)_{\text{R}}$ action on the bosons as
\beq
\begin{split}
 T_{R} : \begin{pmatrix} q \\ \qt \end{pmatrix} \mapsto \begin{pmatrix} a q + b \qt^\dag \\ -b q^\dag + a \qt \end{pmatrix}~,\qquad |a|^2 + |b|^2 = 1.
\end{split}
\eeq
In what follows we split the $2n$ chiral multiplets into a column vector $Q$ and a row vector $\Qt$ (with transpose $\Qt^t$), so that
the Lagrangian for $n$ free hypermultiplets is
\begin{align}
\label{FreeFieldTheory}
\cL =  \int d^4\theta ~ \bQ^\dag \bQ~,\quad 
\bQ \equiv \begin{pmatrix} Q \\ \Qt^{t} 
\end{pmatrix}.
\end{align}
We want to identify global symmetries that commute with both ${\cal N}=1$ and $\SU(2)_{\text{R}}$. The first requirement means that these global symmetries must act linearly on the superfields $\bQ$:
\begin{align}
T_{\bf M}: \bQ \to \bM \bQ,\qquad  \quad
T_{\bf M}: \begin{pmatrix} Q \\ \Qt^{t}
\end{pmatrix} \to
\begin{pmatrix} M_1 & N_1 \\ N_2 & M_2
\end{pmatrix}
\begin{pmatrix} Q \\ \Qt^{t}
\end{pmatrix}~,
\end{align}
where ${\bf M}$ satisfies ${\bf M}{\bf M}^\dag = \iden_{2n}$, {\it i.e.}  $\bM \in \GU(2n)$.  Since the $\SU(2)_{\text{R}}$ acts trivially on fermions, we just need to determine the set of ${\bf M}$ restricted to the bosons that commute with the $\SU(2)_{\text{R}}$ action. Evaluating the composition of two arbitrary rotations on the chiral fields explicitly,
\begin{align}
T_{R} T_{\bM} &: \begin{pmatrix} q \\ \qt \end{pmatrix} \mapsto
\begin{pmatrix}
a(M_1 q + N_1 \qt^t) + b (N_2^\ast q^\ast + M_2^\ast \qt^\dag) \\
-b(\qt^\ast N_1^\dag + q^\dag M_1^\dag) + a (q^t N_2^t + \qt M_2^t)
\end{pmatrix}~,
\nonumber\\
T_{\bM} T_{R} & : \begin{pmatrix} q \\ \qt \end{pmatrix} \mapsto
\begin{pmatrix}
a (M_1 q + N_1 \qt^t) + b (M_1 \qt^\dag -N_1 q^\ast) \\
a(q^t N_2^t + \qt M_2^t) + b(\qt^\ast N_2^t - q^\dag M_2^t)
\end{pmatrix}~,
\end{align}
we see that $[T_{\bf M},T_R]=0$ if and only if
\begin{align}
M_1 = M_2^\ast,\qquad N_1 = - N_2^\ast~.\label{SpCondition}
\end{align}
Equivalently,  $\bM J \bM^t = J$, where $J$ is the symplectic structure
%
\begin{align}
J = \begin{pmatrix}  0 & \iden_{n} \\ -\iden_{n}  & 0 
\end{pmatrix}~.
\end{align}
Hence $\bM \in \GU(2n)\cap \Sp(2n,\C) \equiv \Sp(n)$, the compact unitary symplectic group.\footnote{In these conventions  $\Sp(1) = \SU(2)$.}
%
%
 We have uncovered the global symmetry group of $n$ free hypermultiplets: $\GU(1)_{\text{R}}\times \SU(2)_{\text{R}} \times \Sp(n)$, with matter in the fundamental of $\Sp(n)$, a pseudoreal representation.\footnote{As discussed in~\cite{Seiberg:1994aj}, at the level of groups this action is not completely disjoint from that of the Lorentz group, but that will not affect our analysis at the level of the algebra.} 

\section{Representation theory}
\label{sec:reps}
In this section we will characterize the global symmery algebra of a weakly coupled Lagrangian $\cN=2$ gauge theory.  Starting with a free theory of $n$ hypermultiplets, we gauge a semisimple subalgebra $\Lg$ of the global symmetry algebra $\sp(n)$ of the free theory.  The global symmetry algebra $\LCg$ is the commutant of $\Lg$ in $\sp(n)$, i.e. 
\begin{align}
\LCg = \{ x \in \sp(n) ~|~~ \CO{x}{y} = 0\quad\text{for all}~ y \in \Lg\}~.
\end{align}
This is also known as the centralizer of $\Lg$ in $\sp(n)$. We will prove the following theorem.
\begin{theorem}
\label{thm:fieldsmedal}
Let $\Lg$ be a semisimple subalgebra of $\sp(n)$.  Then the commutant subalgebra $\LCg$ of $\Lg$ in $\sp(n)$ is
\begin{align*}
\LCg = \bigoplus_i \sp(k_i) \oplus \bigoplus_p \so(l_p) \oplus \bigoplus_q \Lu(m_q)~,
\end{align*}
and the fundamental of $\sp(n)$ decomposes under $\sp(n) \supset \Lg \oplus\, \LCg$ as
\begin{align*}
\rep{2n} = 
\bigoplus_i (\rep{r^+_i},\rep{2k_i})\oplus
\bigoplus_p (\rep{r^-_p},\rep{l_p}) \oplus
\bigoplus_q  \left[ (\rep{r^c_q}, \rep{m_q}) \oplus (\brep{r^c_q},\brep{m_q})\right]~,
\end{align*}
where $\rep{r^+_i}$, $\rep{r^-_p}$, $\rep{r^c_q}$ are distinct irreducible representations of $\Lg$ that are, respectively, real, pseudoreal, or complex, and $\rep{2k_i}$, $\rep{l_p}$, and $\rep{m_q}$ denote the fundamental representations of the corresponding factors in $\LCg$.
\end{theorem}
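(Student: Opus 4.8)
The plan is to reduce the statement to Schur's lemma applied to the $\Lg$-module $\C^{2n}=\rep{2n}$, combined with an analysis of how the symplectic form $J$ and the invariant Hermitian form interact with the isotypic decomposition. Since $\Lg$ is semisimple, Weyl's theorem gives complete reducibility of $\C^{2n}$; grouping isomorphic summands I write $\C^{2n}=\bigoplus_a R_a\otimes U_a$ with the $R_a$ pairwise non-isomorphic irreducible $\Lg$-modules and the $U_a$ their multiplicity spaces. By Schur's lemma the algebra of $\Lg$-equivariant endomorphisms is $\bigoplus_a\End(U_a)$, so \emph{before} imposing the two conditions that cut $\sp(n)$ out of $\End(\C^{2n})$ --- preservation of $J$, i.e. $xJ+Jx^t=0$, and anti-Hermiticity $x^\dagger=-x$ with respect to the standard Hermitian form $h$ --- the commutant is simply $\bigoplus_a\End(U_a)$. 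It remains to intersect this with the two conditions.

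Next I would determine how $J$ sits in the decomposition. Since $J$ is $\Lg$-invariant and $\Hom_\Lg(R_a,R_b^\ast)=0$ unless $R_b\cong R_a^\ast$, the form $J$ can only pair the block $R_a\otimes U_a$ with the block carrying $R_a^\ast$, which yields exactly the real/pseudoreal/complex trichotomy for $R_a$. If $R_a$ is real it carries a unique-up-to-scale invariant \emph{symmetric} form $\beta_a$ and $J|_{R_a\otimes U_a}=\beta_a\otimes\gamma_a$; antisymmetry of $J$ forces $\gamma_a$ to be a nondegenerate \emph{antisymmetric} form on $U_a$, so $\dim U_a=2k_i$ is even. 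If $R_a$ is pseudoreal, $\beta_a$ is antisymmetric, so $\gamma_a$ must be symmetric and nondegenerate, $\dim U_a=l_p$. If $R_a$ is complex then $R_a\not\cong R_a^\ast$, so $J$ pairs the $R_a$-block nondegenerately with the block carrying $R_a^\ast$; this identifies the two multiplicity spaces as duals, $U_{a^\ast}\cong U_a^\ast$, with $\dim U_a=\dim U_{a^\ast}=m_q$, and interchanges the two blocks.

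Finally I would impose the two $\sp(n)$-conditions on $(\phi_a)_a\in\bigoplus_a\End(U_a)$. Preservation of $J$ gives $\phi_a\in\sp(U_a,\gamma_a)$ for real $R_a$, $\phi_a\in\so(U_a,\gamma_a)$ for pseudoreal $R_a$, and $\phi_{a^\ast}=-\phi_a^t$ for each complex pair, so that such a pair contributes a single copy of $\End(U_a)$. Imposing anti-Hermiticity with respect to $h$ --- which, being $\Lg$-invariant, restricts to a Hermitian form on each $U_a$ --- then intersects these complexified algebras with their compact real forms, giving $\sp(k_i)$, $\so(l_p)$, and $\Lu(m_q)$ respectively. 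Assembling the blocks produces the asserted form of $\LCg$, and re-reading $\C^{2n}=\bigoplus_a R_a\otimes U_a$ with $U_a$ now regarded as the defining representation of its commutant factor yields the stated branching of $\rep{2n}$, with $\rep{r^+_i}=R_a$ (real), $\rep{r^-_p}=R_a$ (pseudoreal), and, choosing one representative from each complex pair, $\rep{r^c_q}=R_a$ and $\brep{r^c_q}=R_a^\ast$. The step that needs genuine care --- rather than a routine appeal to Schur --- is the simultaneous alignment of the three structures (the $\Lg$-action, the symplectic form $J$, and the Hermitian form $h$) on each isotypic component, so that $J=\beta_a\otimes\gamma_a$ and $h=h_{R_a}\otimes h_{U_a}$ hold there at once and so that the bilinear form $\gamma_a$ is compatible with $h_{U_a}$; once this alignment is in place, the remaining bookkeeping --- evenness of $\dim U_a$ in the real case, $\dim U_a=\dim U_{a^\ast}$ in the complex case, and the matching of symmetry types --- is forced, and no case can be missed.
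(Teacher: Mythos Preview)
Your argument is correct and takes a genuinely different route from the paper's.  The paper works on the adjoint side: it uses $\tadj\sp(n)=\Sym^2\rep{2n}$, decomposes $\rep{2n}$ under $\Lg$ according to reality type (Lemmas~\ref{lem:simpseud1} and~\ref{lem:simpseud2}), and then identifies the $\Lg$-singlets in $\Sym^2\rep{2n}$ as the generators of $\LCg$; a case analysis of which summands in~(\ref{eq:irreps}) can appear without producing extraneous non-trivial $\LLh$-representations forces each multiplicity space to be an irrep of a single simple factor, which is then recognized as $\sp$, $\so$, or $\su$ from $\Sym^2$, $\wedge^2$, or $\rep{R}\otimes\brep{R}$.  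You instead compute the commutant directly: the isotypic decomposition plus Schur gives $\End_{\Lg}(\C^{2n})=\bigoplus_a\End(U_a)$, and you then intersect with the two defining conditions of $\sp(n)$ block by block, reading off the bilinear form $\gamma_a$ on $U_a$ from the reality type of $R_a$.  This is essentially the classification of reductive dual pairs in $\Sp$ and is arguably cleaner --- it avoids the paper's somewhat delicate exclusion of the cross-terms in the second line of~(\ref{eq:irreps}) --- while the paper's approach stays entirely within the language of branching rules.  The step you flag as needing care, the simultaneous alignment of $J=\beta_a\otimes\gamma_a$ and $h=h_{R_a}\otimes h_{U_a}$ on each isotypic block so that $\gamma_a$ and $h_{U_a}$ are compatible and the intersection yields the \emph{compact} real form, is indeed the only non-routine point; it follows from positive-definiteness of $h$ (since $\Lg\subset\sp(n)$ acts unitarily, $h_{R_a}$ and hence $h_{U_a}$ can be taken positive definite) together with a standard Gram--Schmidt normalization, and once it is in place the remaining bookkeeping is forced exactly as you describe.
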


The result has a simple implication for the physics:  if we gauge a semisimple $\Lg \subset \sp(n)$, then the global symmetry group will be a sum of classical Lie algebras acting on the different flavors in fundamental representations.

\subsection{A few familiar gaugings}
Before we turn to the general case we will review the familiar cases of $\cN=2$ SQCD with $\Lg$ one of $\su(p)$, $\sp(q)$, or $\so(m)$~\cite{Argyres:1996eh,Argyres:1996hc}.  This is accomplished via the embeddings
\begin{align}
\label{eq:sqcd}
\sp(pm) &\supset \su(p)\oplus\Lu(m)~,& \rep{2pm} &= (\rep{p},\rep{m})\oplus(\brep{p},\brep{m})~,\nonumber\\
\sp(qm) &\supset \sp(q) \oplus \so(m)~,& \rep{2qm} & = (\rep{2q},\rep{m})~.
\end{align}
It is  straightforward to then construct embeddings for any simple $\Lg \subset \sp(n)$.  Suppose $\rep{r}$ is an irreducible representation (irrep) of $\Lg$ of dimension $k$.  Then, depending on whether $\rep{r}$ is real, pseudoreal, or complex, there is an S-subalgebra embedding $\Lg \subset \so(k)$, $\Lg \subset \sp(k)$, or $\Lg \subset \su(k)$~\cite{Cahn:1984la}.  It is then a simple matter to use the embeddings in~(\ref{eq:sqcd}) to construct suitable gauge theories.  For instance, to build a $\Le_6$ gauge theory with $s$ hypermultiplets in the $\rep{27}$, we need $s$ conjugate multiplets in $\brep{27}$, and we use the embedding
\begin{align}
\sp(27 s) &\supset \su(27) \oplus \Lu(s) \supset \Le_6 \oplus \Lu(s)~,&
\rep{54 s} = (\rep{27},\rep{s}) \oplus (\brep{27},\brep{s})~.
\end{align}
In all of these cases the reality properties of various irreps play a key role in constructing the embedding. As we will see this will also be the case more generally.  Our strategy will rely on two simple facts:
\begin{enumerate}
\item the decomposition of $\rep{2n}$ under $\sp(n)\supset \Lg \oplus\, \LCg$ determines the decomposition of $\tadj \sp(n) = \Sym^2\!\rep{2n}$;
\vspace{-2mm}
\item $\rep{2n}$ is usefully decomposed according to reality properties of irreps of $\LLh$. 
\end{enumerate}

\subsection{Warm-up:  decomposing pseudoreal representations}
We begin by fixing some useful conventions and reviewing a few definitions and familiar facts from representation theory.  Throughout we work  with anti-Hermitian generators $\cT$ for the Lie algebras.  The standard definitions for real/pseudoreal/complex representations are then as follows~\cite{Cahn:1984la,McKay:1981rt,Slansky:1981yr}.  Let $\Lg$ be a simple Lie algebra with irrep $\rep{r}$.   Schur's lemma and some familiar facts about complex matrices~\cite{Zumino:1962nf} imply that, up to a change of basis, $\rep{r}$ admits at most one bilinear invariant, which must either be symmetric or skew-symmetric. This leads to a classification of the irreps as either real, pseudoreal or complex, which we will denote by superscripts $\rep{r^+}$, $\rep{r^-}$, and $\rep{r^c}$: \\[-4mm]
\begin{enumerate}
\item
$\rep{r}$ is real if $\Sym^2\!\rep{r} \supset \rep{1}$.  We can choose a basis for $\rep{r}$ so that $\cT_{\rep{r}} = \cT^\ast_{\rep{r}}$ are real skew-symmetric matrices, so that if $\rep{r}\neq \rep{1}$, then $\wedge^2 \rep{r} \supset \tadj\Lg$.
\item
$\rep{r}$ is pseudoreal if $\wedge^2 \rep{r} \supset \rep{1}$.  In this case $\dim\rep{r}$ is even, and we can choose a basis for $\rep{r}$ so that $\cT^\ast_{\rep{r}} = -\cJ \cT_{\rep{r}} \cJ$, where $\cJ$ is a complex structure on $\rep{r}$.
In this case $\Sym^2\!\rep{r} \supset \tadj\Lg$.\footnote{$\tadj\Lg$ must occur in $\rep{r}\otimes\rep{r} = \Sym^2{\rep{r}} \oplus \wedge^2\rep{r}$, and since $\tadj\Lg$ is irreducible, it must occur in the first factor, since otherwise $\rep{r}$ would be real.}
\item
$\rep{r}$ is complex if it is neither real or pseudoreal, in which case $\rep{r}\otimes\brep{r} \supset \rep{1} \oplus\tadj\Lg$.
\end{enumerate}
While this is  familiar for $\rep{r}$ an irrep of a simple Lie algebra $\Lg$, it holds more generally for {\it any faithful} irrep of a {\it semisimple} $\Lg$.\footnote{Let $V_{\rep{r}}$ denote the vector space of the irrep $\rep{r}$. A representation $\Lg \to \GL(V_{\rep{r}})$ is faithful if it has a trivial kernel.}  As this is perhaps less familiar, we provide the following lemma.
\begin{lemma}
\label{lem:semisimp}
Let $\rep{r} = (\rep{\rho_1},\rep{\rho_2},\ldots,\rep{\rho_k})$ be a faithful irrep of a semisimple Lie algebra $\Lg = \Lg_1\oplus\Lg_2\oplus\cdots\oplus \Lg_k$.  If $\rep{r}$ is real, then $\Sym^2\!\rep{r} \supset \rep{1}$ and $\wedge^2\rep{r} \supset\tadj\Lg$.  If $\rep{r}$ is pseudoreal, then $\Sym^2\!\rep{r} \supset \tadj \Lg$ and $\wedge^2\rep{r}\supset \rep{1}$. If $\rep{r}$ is complex, then $\rep{r}\otimes\brep{r} \supset \rep{1}\oplus \tadj\Lg$.   Moreover, the statements about the singlets remain true even if $\rep{r}$ fails to be faithful.
\end{lemma}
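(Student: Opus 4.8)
The plan is to reduce the whole lemma to two classical inputs — Schur's lemma (already used for simple $\Lg$ in the text) and the multiplicativity of invariant bilinear forms under tensor products — and then to locate the generators of $\Lg$ inside $\rep{r}\otimes\brep{r}$. First I would pin down the reality type of $\rep{r}=(\rep{\rho_1},\ldots,\rep{\rho_k})$. Since $\rep{r}$ is an irreducible $\Lg$-module, Schur gives $\dim\Hom_\Lg(\rep{r},\brep{r})\le 1$, so $\rep{r}$ admits at most one invariant bilinear form up to scale. Because $\brep{r}=(\brep{\rho_1},\ldots,\brep{\rho_k})$ and $\Hom_\Lg(\rep{r},\brep{r})\cong\bigotimes_i\Hom_{\Lg_i}(\rep{\rho_i},\brep{\rho_i})$, the module $\rep{r}$ is self-conjugate iff every $\rep{\rho_i}$ is, i.e. $\rep{r}$ is complex precisely when some $\rep{\rho_i}$ is complex. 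When no factor is complex, each $\rep{\rho_i}$ carries an invariant form $B_i$ — symmetric if $\rep{\rho_i}$ is real, skew if pseudoreal — and $B=B_1\otimes\cdots\otimes B_k$ is the invariant form on $\rep{r}$; it is symmetric iff an even number of the $B_i$ are skew. Hence $\rep{r}$ is real (resp. pseudoreal) exactly when an even (resp. odd) number of the $\rep{\rho_i}$ are pseudoreal and none is complex.

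Next I would dispose of the singlet statements, which use no faithfulness and therefore already give the ``moreover''. For any irrep $\rep{r}$, $\Hom_\Lg(\rep{r},\rep{r})$ is one-dimensional, so $\rep{r}\otimes\brep{r}\supset\rep{1}$. If $\rep{r}$ is self-conjugate, identify $\brep{r}\cong\rep{r}$ via the form $B$ above; the invariant then becomes the class of $B$ in $\rep{r}^\ast\otimes\rep{r}^\ast\cong\rep{r}\otimes\rep{r}$, which sits in $\Sym^2\rep{r}$ when $B$ is symmetric (so $\rep{r}$ is real) and in $\wedge^2\rep{r}$ when $B$ is skew (so $\rep{r}$ is pseudoreal).

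For the adjoint statements I would bring in faithfulness. Let $V$ be the carrier space of $\rep{r}$ and let $\rho(\Lg)\subset\End(V)$ be the image of $\Lg$; under the adjoint action $x\cdot A=\CO{\rho(x)}{A}$ this is a $\Lg$-submodule of $\End(V)$, and faithfulness of $\rep{r}$ — equivalently of each $\rep{\rho_i}$ — makes it isomorphic to $\tadj\Lg=\bigoplus_i\tadj\Lg_i$ as a $\Lg$-module. In the complex case, $\End(V)=V\otimes V^\ast=\rep{r}\otimes\brep{r}$ contains both the scalars $\rep{1}$ and $\rho(\Lg)\cong\tadj\Lg$, giving $\rep{r}\otimes\brep{r}\supset\rep{1}\oplus\tadj\Lg$. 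In the self-conjugate case I would use $B$ to identify $\End(V)\cong V\otimes V=\rep{r}\otimes\rep{r}$; the operators $A$ with $B(Av,w)+B(v,Aw)=0$ — a subspace containing every $\cT_a$ by invariance of $B$ — form $\so(V)$ when $B$ is symmetric and $\sp(V)$ when $B$ is skew, and these coincide with $\wedge^2\rep{r}$ and $\Sym^2\rep{r}$ respectively (the Lie algebra preserving a symmetric form on a space $W$ is $\wedge^2 W$, the one preserving a symplectic form is $\Sym^2 W$). Thus $\tadj\Lg\subseteq\wedge^2\rep{r}$ when $\rep{r}$ is real and $\tadj\Lg\subseteq\Sym^2\rep{r}$ when $\rep{r}$ is pseudoreal, which together with the singlet statements is the full claim.

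The step I expect to need the most care is the last one: tracking that the $B$-preserving operators correspond to $\wedge^2\rep{r}$ for a symmetric $B$ but to $\Sym^2\rep{r}$ for a skew $B$ — a sign that flips when the invariant form is antisymmetric — and checking that $\rho(\Lg)$ really is $\tadj\Lg$ \emph{as a $\Lg$-module}, not merely abstractly isomorphic to $\Lg$, which is exactly where faithfulness of each $\rep{\rho_i}$ enters. Everything else is a routine application of Schur's lemma and of the behaviour of invariant bilinear forms under tensor products.
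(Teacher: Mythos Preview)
Your argument is correct and takes a genuinely different route from the paper. Both proofs open the same way --- by observing that the invariant bilinear form on $\rep{r}=(\rep{\rho_1},\ldots,\rep{\rho_k})$ factors as $B_1\otimes\cdots\otimes B_k$, so that $\rep{r}$ is real/pseudoreal according to the parity of the number of pseudoreal $\rep{\rho_i}$. After that the paper proceeds by induction on $k$: it treats the real case explicitly, adding either one real factor or a pair of pseudoreal factors at each step, and computes $\Sym^2$ and $\wedge^2$ of the enlarged tensor product by hand to exhibit the singlet and the adjoint summands. Your argument instead locates $\tadj\Lg$ in one stroke: faithfulness makes $\rho(\Lg)\subset\End(V)$ a copy of $\tadj\Lg$ as a $\Lg$-module, and invariance of $B$ forces $\rho(\Lg)$ into the $B$-preserving subalgebra, which is $\wedge^2\rep{r}$ or $\Sym^2\rep{r}$ according to the symmetry of $B$. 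Your approach is shorter, handles the real and pseudoreal cases uniformly rather than spelling out only one, and makes transparent \emph{why} faithfulness is needed (it is exactly what identifies $\rho(\Lg)$ with $\tadj\Lg$ rather than a quotient). The paper's inductive bookkeeping, on the other hand, yields the explicit branching of $\Sym^2\!\rep{r}$ and $\wedge^2\rep{r}$ at each stage, which is closer in spirit to the computations used later in the proof of the main theorem.
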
  
\begin{proof}
We will describe the proof for $\rep{r}$ real; the other cases are handled analogously.\\[2mm]  
The symmetric bilinear invariant for $\rep{r}$ must be a tensor product of invariants of the $\rep{\rho_s}$ irreps.  Since each $\rep{\rho_s}$ has at most one invariant that is either symmetric or anti-symmetric, each $\rep{\rho_s}$ must be real or pseudoreal, and for real $\rep{r}$ the number of pseudoreal $\rep{\rho_s}$ must be even.  

The result clearly holds for $k=1$, where $\Lg$ is simple.  Assuming it holds for $\Lg=\Lg_0$, there are two ways to increase $k$:
\vspace{-2mm}
\begin{enumerate}
\item $\Lg = \Lg_0\oplus \Lg_{k+1}$ and $\rep{r} = (\rep{r_0},\rep{\rho^+_{k+1}})$ : 
\vspace{-2mm}
\beq
\begin{split}
\Sym^2\!\rep{r} &\supset (\Sym^2\!\rep{r_0},\Sym^2\!\rep{\rho^+_{k+1}})  \supset (\rep{1},\rep{1})~, \cr
\wedge^2\rep{r} & = (\Sym^2\!\rep{r_0},\wedge^2\rep{\rho^+_{k+1}}) \oplus (\wedge^2\rep{r_0},\Sym^2\!\rep{\rho^+_{k+1}}) \cr
&\supset (\rep{1},\tadj\Lg_{k+1}) \oplus (\tadj\Lg_0,\rep{1})~.
\end{split}\label{test}
\eeq
\item $\Lg = \Lg_0\oplus\Lg_{k+1}\oplus\Lg_{k+2}$ and $\rep{r} = (\rep{r_0},\rep{\rho^-_{k+1}},\rep{\rho^-_{k+2}})$:
\beq
\begin{split}
\Sym^2\!\rep{r} & \supset 
(\Sym^2\!\rep{r_0},\wedge^2 \rep{\rho^-_{k+1}},\wedge^2 \rep{\rho^-_{k+2}}) \supset (\rep{1},\rep{1},\rep{1})~,\cr
\wedge^2\rep{r} & \supset   (\wedge^2 \rep{r_0}, \wedge^2\rep{\rho^-_{k+1}},\wedge^2 \rep{\rho^-_{k+2}})   \oplus (\Sym^2\rep{r_0},\Sym^2\!\rep{\rho^-_{k+1}},\wedge^2 \rep{\rho^-_{k+2}})  \cr
&\qquad\qquad\oplus  (\Sym^2\!\rep{r_0},\wedge^2\rep{\rho^-_{k+1}},\Sym^2\!\rep{\rho^-_{k+2}})
 \cr
&\supset  (\tadj\Lg_0,\rep{1},\rep{1}) \oplus (\rep{1},\tadj\Lg_{k+1}, \rep{1}) \oplus (\rep{1},\rep{1},\tadj\Lg_{k+2})~.
\end{split}
\eeq
\end{enumerate}
The result for faithful $\rep{r}$ follows by induction on $k$. Finally, $\rep{r}$ fails to be faithful if and only if $\rep{\rho_s} = \rep{1}$ for some $s$, in which case $\tadj \Lg_s$ will not show up in the decompositions, but the indicated singlets will still be present.  
\end{proof}

The conjugate representation $\brep{r}$ of a semisimple $\Lg$ is related by a similarity transformation to $\rep{r}$ if and only if $\rep{r}$ is real or pseudoreal.  We see from above that for any irrep $\rep{r}$, $\rep{r}\otimes\brep{r}\supset \rep{1}$.  In fact, using crossing symmetry (i.e.~associativity of the tensor product), we have the following result~\cite{Slansky:1981yr,DiFrancesco:1997nk}: 
\begin{lemma}
\label{lemma:crossing}
 Given two irreps $\rep{r_1}$ and $\rep{r_2}$ of a semisimple Lie algebra $\Lg$, $\rep{r_1} \otimes \rep{r_2} \supset \rep{1}$ if and only if $\rep{r_1} = \brep{r_2}$.
\end{lemma}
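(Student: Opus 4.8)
The plan is to deduce the lemma directly from Schur's lemma, via the standard identification of $\Lg$-invariants inside a tensor product with $\Lg$-module homomorphisms. Write $V_1$ and $V_2$ for the carrier spaces of the irreps $\rep{r_1}$ and $\rep{r_2}$. A vector $v\in V_1\otimes V_2$ is the same datum as a linear map $\phi_v\colon V_1^\ast\to V_2$, and this correspondence intertwines the $\Lg$-actions, where $V_1^\ast$ carries the contragredient action, i.e. the representation $\brep{r_1}$. Hence the invariant subspace $(V_1\otimes V_2)^\Lg$ is canonically isomorphic to $\Hom_\Lg(\brep{r_1},\rep{r_2})$, and since the multiplicity of $\rep{1}$ in $\rep{r_1}\otimes\rep{r_2}$ equals $\dim(V_1\otimes V_2)^\Lg$, the whole statement reduces to computing $\Hom_\Lg(\brep{r_1},\rep{r_2})$.

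For the reverse implication ($\rep{r_1}=\brep{r_2}$ implies $\rep{r_1}\otimes\rep{r_2}\supset\rep{1}$): conjugation is an involution, so $\rep{r_1}=\brep{r_2}$ gives $\brep{r_1}=\rep{r_2}$, whence $\Hom_\Lg(\brep{r_1},\rep{r_2})\cong\End_\Lg(V_2)$, which contains the identity and is therefore nonzero. (Using that finite-dimensional irreps of a semisimple Lie algebra over $\bC$, equivalently of its compact form, are absolutely irreducible, Schur gives $\End_\Lg(V_2)=\bC$, so the multiplicity of $\rep{1}$ is in fact exactly one.) For the forward implication: if $\rep{r_1}\otimes\rep{r_2}\supset\rep{1}$, then $\Hom_\Lg(\brep{r_1},\rep{r_2})\neq 0$, so there is a nonzero $\Lg$-module map $\phi\colon\brep{r_1}\to\rep{r_2}$; both modules are irreducible, so $\Ker\phi$ and $\im\phi$ are each either trivial or the whole space, and with $\phi\neq 0$ this forces $\phi$ to be an isomorphism, whence $\brep{r_1}\cong\rep{r_2}$, i.e. $\rep{r_1}\cong\brep{r_2}$.

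I do not anticipate a genuine obstacle: the only subtlety is that the clean form of Schur used above ($\End_\Lg=\bC$) relies on absolute irreducibility, which holds for the representations at hand, so one should agree that \emph{irreducible} throughout is meant in this absolute sense. If one preferred to follow the route suggested by the reference to crossing symmetry, one would instead work with the fusion multiplicities $N_{\rep{r_1}\rep{r_2}}^{\rep{r_3}}$ and associativity of the tensor product — the multiplicity of $\rep{1}$ in $\rep{r_1}\otimes\rep{r_2}$ equals that of $\brep{r_2}$ in $\rep{r_1}$ — but this re-derives the same Schur-theoretic content with more bookkeeping. Alternatively, for $\Lg=\bigoplus_s\Lg_s$ one can factor each irrep over the simple summands and apply the simple case componentwise, which is how I would handle the fully explicit write-up if a reader objected to invoking Schur for a non-simple algebra.
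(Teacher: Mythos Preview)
Your argument is correct. The identification $(V_1\otimes V_2)^{\Lg}\cong\Hom_{\Lg}(\brep{r_1},\rep{r_2})$ together with Schur's lemma is exactly the right tool, and your handling of both implications is clean. The caveat about absolute irreducibility over $\bC$ is well placed and is indeed the only subtlety.

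The paper takes a slightly different, less self-contained route: it quotes the general crossing-symmetry statement that $\rep{r_1}\otimes\rep{r_2}\supset\rep{r_3}$ implies $\rep{r_1}\otimes\brep{r_3}\supset\brep{r_2}$ (with references to Slansky and Di~Francesco et al.), and then sets $\rep{r_3}=\rep{1}$, so that $\rep{r_1}$ itself contains $\brep{r_2}$ and hence equals it by irreducibility; the reverse implication is taken from the preceding observation that $\rep{r}\otimes\brep{r}\supset\rep{1}$. Your approach is more elementary and transparent, since the general crossing relation is itself proved via the same $\Hom$/tensor adjunction and Schur argument you give --- you are effectively supplying the proof that the paper outsources to a citation. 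The paper's version has the advantage of brevity and of situating the lemma as a special case of a more general fusion-rule symmetry; yours has the advantage of being complete on the page. You correctly anticipated this alternative in your closing paragraph.
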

The more general statement of crossing symmetry is that if $\rep{r_1}\otimes\rep{r_2} \supset \rep{r_3}$, then $\rep{r_1}\otimes\brep{r_3}$ contains $\brep{r_2}$.  Our result follows by setting $\rep{r_3} = \rep{1}$. 

Having reviewed some basic terminology, we end this section with two results on the branching of pseudoreal representations.
\begin{lemma}
\label{lem:simpseud1}
Let $\rep{R}$ be a pseudoreal irrep of a semisimple Lie algebra $\Lg$, and let $\LLh$ be a semisimple subalgebra of $\Lg$. Then
\begin{align*}
\rep{R} = 
\bigoplus_i (\rep{r^+_i}\oplus \rep{r^+_i} )^{\oplus k_i} \oplus 
\bigoplus_p (\rep{r^-_p})^{\oplus l_p} \oplus
\bigoplus_q (\rep{r^c_q}\oplus \brep{r^c_q})^{\oplus m_q}~,
\end{align*}
where $\rep{r^+_i}$, $\rep{r^-_p}$, and $\rep{r^c_q}$ are distinct real, pseudoreal, and complex irreps of $\LLh$.
\end{lemma}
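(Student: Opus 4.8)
The plan is to use the nondegenerate skew-symmetric invariant bilinear form that $\rep{R}$ carries by virtue of being pseudoreal. Since $\wedge^2\rep{R}\supset\rep{1}$, there is an $\Lg$-invariant symplectic form $\omega$ on the representation space $V_{\rep{R}}$, and when we restrict the action to the subalgebra $\LLh$ the same $\omega$ remains $\LLh$-invariant and nondegenerate. First I would decompose $V_{\rep{R}}=\bigoplus_a V_a\otimes W_a$ into $\LLh$-isotypic components, where the $V_a$ range over the distinct irreps of $\LLh$ that occur, $W_a$ is the multiplicity space, and $m_a=\dim W_a$. The three families $\rep{r^+_i},\rep{r^-_p},\rep{r^c_q}$ are then just the $V_a$ sorted by reality type.

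The second step is to determine how $\omega$ couples these blocks. A nonzero $\LLh$-invariant pairing of $V_a\otimes W_a$ with $V_b\otimes W_b$ would be a nonzero element of $\Hom_{\LLh}(V_a,V_b^\ast)\otimes\Hom(W_a,W_b^\ast)$, which by Schur's lemma (equivalently Lemma~\ref{lemma:crossing}) forces $V_b\cong V_a^\ast$. Hence $\omega$ is block-diagonal with respect to the coarser decomposition into pieces $V_a\otimes W_a$ for self-dual $V_a$, and pieces $(V_a\otimes W_a)\oplus(V_a^\ast\otimes W_a')$ for complex $V_a$ paired with its conjugate. Nondegeneracy of $\omega$ then immediately gives, in the complex case, that $\rep{r^c_q}$ occurs iff $\brep{r^c_q}$ does, and with equal multiplicity, producing the blocks $(\rep{r^c_q}\oplus\brep{r^c_q})^{\oplus m_q}$; skew-symmetry imposes nothing further there, since it merely identifies the $(V_a,V_a^\ast)$ pairing with minus the transpose of the $(V_a^\ast,V_a)$ one.

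The third and final step is the self-dual case. For self-dual irreducible $V_a$, $\Hom_{\LLh}(V_a,V_a^\ast)$ is one-dimensional, spanned by the essentially unique invariant form $\beta_a$, which is symmetric when $V_a$ is real and skew when $V_a$ is pseudoreal. Consequently every $\LLh$-invariant bilinear form on $V_a\otimes W_a$ equals $\beta_a\otimes\gamma_a$ for some bilinear form $\gamma_a$ on $W_a$, and is nondegenerate precisely when $\gamma_a$ is. Matching symmetry types against $\omega$: if $V_a$ is real, $\beta_a$ symmetric forces $\gamma_a$ skew, so $m_a$ is even and the block refines to $(V_a\oplus V_a)^{\oplus m_a/2}$; if $V_a$ is pseudoreal, $\beta_a$ skew forces $\gamma_a$ symmetric, which can be nondegenerate for any $m_a$. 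Collecting the real blocks $(\rep{r^+_i}\oplus\rep{r^+_i})^{\oplus k_i}$, the pseudoreal blocks $(\rep{r^-_p})^{\oplus l_p}$, and the complex blocks $(\rep{r^c_q}\oplus\brep{r^c_q})^{\oplus m_q}$ gives the stated decomposition.

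I expect the main obstacle to be making the factorization arguments airtight rather than merely plausible: in particular, justifying that an $\LLh$-invariant form on an isotypic component $V_a\otimes W_a$ genuinely factors as $\beta_a\otimes\gamma_a$ with the symmetry type of $\gamma_a$ dual to that of $\beta_a$ (a clean application of Schur to $\Hom_{\LLh}$ of tensor products, but one that must be set up carefully), and confirming that the cross terms between $V_a$ and $V_a^\ast$ really impose only equality of multiplicities and no parity condition. The remaining ingredients are routine invocations of Schur's lemma and Lemma~\ref{lemma:crossing}.
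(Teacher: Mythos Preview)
Your proposal is correct and proceeds by a route parallel to, but not identical with, the paper's.  The paper encodes pseudoreality through the antilinear intertwiner $\cJ$ satisfying $\cJ\,\cT_{\rep{R}}^\ast=\cT_{\rep{R}}\,\cJ$, works in adapted bases, and shows by an explicit commutator computation that on a real isotypic block $\cJ$ reduces to $M\otimes\iden$ with $M$ skew and invertible, forcing even multiplicity.  You instead encode pseudoreality through the skew bilinear invariant $\omega$ and argue basis-free: Schur on $\Hom_{\LLh}$ makes $\omega$ block-diagonal over self-dual irreps and conjugate pairs, and on a self-dual block the unique factorization $\omega|_{V_a\otimes W_a}=\beta_a\otimes\gamma_a$ converts skewness of $\omega$ into the opposite symmetry type for $\gamma_a$ on the multiplicity space.

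Both arguments are sound; they are two faces of the same structure (the bilinear $\omega$ and the antilinear $\cJ$ determine one another).  Your approach is cleaner in that it immediately exhibits the symplectic or orthogonal form on each multiplicity space, which is exactly what the paper extracts \emph{after} its proof en route to the next lemma.  The paper's approach, by contrast, gives the explicit block form of $\cJ$ directly, which is what is used downstream.  Your stated worry about the factorization $\beta_a\otimes\gamma_a$ is not a real obstacle: since $\Hom_{\LLh}(V_a,V_a^\ast)$ is one-dimensional, every element of $\Hom_{\LLh}(V_a,V_a^\ast)\otimes\Hom(W_a,W_a^\ast)$ is a pure tensor $\beta_a\otimes\gamma_a$, and the symmetry type of $\gamma_a$ is then forced.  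The equal-multiplicity claim in the complex case is likewise immediate from nondegeneracy of the off-diagonal pairing.
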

\begin{proof}
We can decompose $\rep{R}$ as
\begin{align}
\rep{R} = \bigoplus_i (\rep{r^+_i})^{\oplus K_i} \oplus \bigoplus_p (\rep{r^-_p})^{\oplus l_p} \oplus \bigoplus_Q (\rep{r^c_Q})^{\oplus m_Q}~,
\end{align}
where $\rep{r^+_i}$, $\rep{r^-_p}$ and $\rep{r^c_Q}$ are inequivalent irreps.
The generators $\cT_{\rep{R}}$ are  block-diagonal with respect to the decomposition and satisfy 
\begin{align}
\label{eq:pseudo}
\cJ \cT^\ast_{\rep{R}} =  \cT_{\rep{R}} \cJ.
\end{align}
$\cJ$ must act block-diagonally on each block of inequivalent real or pseudoreal representations in the sum.  Furthermore, since $\rep{r^c_Q}$ is not conjugate to $\brep{r^c_Q}$, in order to match the two sides of~(\ref{eq:pseudo}), $\rep{r^c_q}$ occurs in the decomposition only if $\brep{r^c_q}$ occurs as well.  Hence, 
\begin{align}
\rep{R} = \bigoplus_i (\rep{r^+_i})^{\oplus K_i} \oplus \bigoplus_p (\rep{r^-_p})^{l_p} \oplus \bigoplus_q (\rep{r^c_q}\oplus\brep{r^c_q})^{\oplus m_q}~.
\end{align}
Consider the action of $\cJ$ on  $(\rep{r^+_i})^{\oplus K_i}$, denoted by $\cJ_i$.  Without loss of generality the generators $t_i$ of $\LLh$ in $\rep{r^+_i}$ can be taken to be real, and  $\cJ_i = \sum_s M_s \otimes \tau_s$, where $M_s$ is a $K_i\times K_i$ matrix, and $\tau_s$ acts on $\rep{r^+_i}$.  The restriction of~(\ref{eq:pseudo}) to this block is 
\begin{align}
\sum_s M_s \otimes (-t_i \tau_s + \tau_s t_i ) = 0~,
\end{align}
and since $\rep{r^+_i}$ is an irrep, $\cJ_i = M \otimes \iden$ for some invertible skew-symmetric $M$.  Thus, $K_i = 2k_i$, and $M$ is a complex structure on $\C^{k_i}$.  The result follows.
\end{proof}
Analogous considerations determine the action of the complex structure $\cJ$ on the remaining blocks:  $\cJ_p = \iden_{l_p\times l_p} \otimes j_p$, where $j_p$ is the bilinear invariant of $\rep{r^-_p}$, while the action of $\cJ_q$ on $(\rep{r^c_q} \oplus \brep{r^c_q})^{\oplus m_q}$ has the same form as $\cJ_i$, but with $k_i$ replaced by $m_q$.  Hence, we have the following.
\begin{lemma}
\label{lem:simpseud2}
Let $\rep{R}$ be a pseudoreal irrep of a semisimple Lie algebra $\Lg$, and let $\LLh\oplus\LLh'$ be a semisimple subalgebra of $\Lg$. Decomposing $\rep{R}$ with respect to $\LLh\oplus\LLh'$, Lemma~\ref{lem:simpseud1} is refined to
\begin{align*}
\rep{R} = \bigoplus_i (\rep{r^+_i},\rep{R_i}) \oplus \bigoplus_p (\rep{r^-_p},\rep{R_p}) \oplus \bigoplus_q 
\left[(\rep{r^c_q},\rep{R_q})\oplus(\brep{r^c_q},\brep{R_q})\right]~.
\end{align*}
While $\rep{R_i}$, $\rep{R_p}$, $\rep{R_q}$ need not be irreps of $\LLh'$ , $\wedge^2\rep{R_i}\supset\rep{1}$
and $\Sym^2\!\rep{R_p}\supset\rep{1}$.
\end{lemma}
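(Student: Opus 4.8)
The plan is to start from the decomposition of $\rep{R}$ with respect to $\LLh$ provided by Lemma~\ref{lem:simpseud1} and refine it using the extra action of $\LLh'$. Since $\LLh\oplus\LLh'$ is a direct sum, $\CO{\LLh}{\LLh'}=0$, so $\LLh'$ commutes with the $\LLh$-action on $\rep{R}$; it therefore preserves each $\LLh$-isotypic component and, by Schur's lemma, acts only on the corresponding multiplicity space. This immediately upgrades Lemma~\ref{lem:simpseud1} to a decomposition of the stated shape, with $\LLh'$-representations $\rep{R_i}$, $\rep{R_p}$, $\rep{R_q}$ attached to $\rep{r^+_i}$, $\rep{r^-_p}$, $\rep{r^c_q}$ and some a priori unrelated $\rep{R'_q}$ attached to $\brep{r^c_q}$ — none of them necessarily irreducible. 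What remains is to show (i)~$\rep{R'_q}\cong\brep{R_q}$, (ii)~$\wedge^2\rep{R_i}\supset\rep{1}$, and (iii)~$\Sym^2\!\rep{R_p}\supset\rep{1}$.

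Next I would bring in the invariant antisymmetric bilinear form $\omega$ on $\rep{R}$ that exists because $\rep{R}$ is pseudoreal (equivalently $\wedge^2\rep{R}\supset\rep{1}$); it is nondegenerate since the radical of an invariant form on the irrep $\rep{R}$ is a proper subrepresentation, hence zero. Restrict $\omega$ to pairs of $\LLh$-isotypic blocks. By Lemma~\ref{lemma:crossing} an $\LLh$-invariant pairing of the $\rep{s}$-block with the $\rep{s'}$-block can be nonzero only when $\rep{s'}=\brep{s}$; since the irreps occurring in Lemma~\ref{lem:simpseud1} are distinct, with $\rep{r^+_i}$ and $\rep{r^-_p}$ self-dual and $\brep{r^c_q}$ dual to $\rep{r^c_q}$, $\omega$ pairs each real and each pseudoreal block with itself and pairs the $\rep{r^c_q}$-block with the $\brep{r^c_q}$-block, and nondegeneracy of $\omega$ forces each of these restricted pairings to be nondegenerate. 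On a self-paired block $\rep{s}\otimes W$, $\LLh$-equivariance together with the fact that $\rep{s}$ carries a unique (symmetric or skew) invariant bilinear form $b_{\rep{s}}$ forces $\omega|_{\rep{s}\otimes W}=b_{\rep{s}}\otimes\beta_W$ for a unique bilinear form $\beta_W$ on the multiplicity space; this $\beta_W$ is $\LLh'$-invariant (since $\omega$ and, trivially, $b_{\rep{s}}$ are) and nondegenerate (since $\omega$ and $b_{\rep{s}}$ are). Likewise the $\rep{r^c_q}$/$\brep{r^c_q}$ pairing factors as the canonical pairing of $\rep{r^c_q}$ with $\brep{r^c_q}$ tensored with a nondegenerate $\LLh'$-invariant pairing of $\rep{R_q}$ with $\rep{R'_q}$, which (the generators being anti-Hermitian) forces $\rep{R'_q}\cong\brep{R_q}$, establishing (i).

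Finally I would read off the symmetry types, using that a tensor product of two bilinear forms is symmetric exactly when the two factors share the same symmetry type and antisymmetric when they differ. Since $\omega$ is antisymmetric: on a real block $b_{\rep{r^+_i}}$ is symmetric (by definition of real), so $\beta_{\rep{R_i}}$ must be antisymmetric and nondegenerate, giving $\wedge^2\rep{R_i}\supset\rep{1}$ — and, as a bonus, forcing $\dim\rep{R_i}$ even, consistent with Lemma~\ref{lem:simpseud1}; on a pseudoreal block $j_{\rep{r^-_p}}$ is antisymmetric (by definition of pseudoreal), so $\beta_{\rep{R_p}}$ must be symmetric and nondegenerate, giving $\Sym^2\!\rep{R_p}\supset\rep{1}$. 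The argument is essentially bookkeeping; the one point requiring care is pinning down precisely which blocks pair under $\omega$ and checking that the induced form on each multiplicity space is genuinely nonzero, both of which follow from the nondegeneracy of $\omega$ and of the unique invariant form on each participating $\LLh$-irrep. An alternative route, closer in spirit to the preceding discussion, is to take the explicit block forms of the complex structure $\cJ$ found in the proof of Lemma~\ref{lem:simpseud1} and impose $\cJ\,\cT^\ast=\cT\,\cJ$ for the generators $\cT$ of $\LLh'$; this yields the same three claims at the cost of a bit more index manipulation.
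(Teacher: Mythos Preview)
Your argument is correct. The paper does not give a separate formal proof of this lemma; instead it records, just before the statement, the explicit block forms of the antilinear intertwiner $\cJ$ on each $\LLh$-isotypic piece (namely $\cJ_i = M\otimes\iden$ with $M$ a skew complex structure on the real multiplicity space, $\cJ_p = \iden\otimes j_p$, and $\cJ_q$ of the same shape as $\cJ_i$), and then implicitly uses that $\cJ$ also intertwines the $\LLh'$-action to conclude that these block data are $\LLh'$-invariant. Your proof is the bilinear-form rephrasing of exactly this: you trade the antilinear $\cJ$ for the equivalent antisymmetric invariant $\omega$, use Lemma~\ref{lemma:crossing} to see which blocks $\omega$ can pair, and then factor $\omega$ on each block as (the unique $\LLh$-invariant form on $\rep{s}$)$\,\otimes\,$(an $\LLh'$-invariant form on the multiplicity space), reading off its symmetry type from that of $\omega$ and $b_{\rep{s}}$. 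This is slightly more basis-free and makes the nondegeneracy bookkeeping explicit, whereas the paper's route makes the connection to the $\cJ$-computation in Lemma~\ref{lem:simpseud1} more transparent; you already flag the latter as an ``alternative route,'' and indeed the two are the same argument in dual packaging.
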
 

\subsection{Global symmetries}
We now have the tools to prove Theorem~\ref{thm:fieldsmedal}, and we present the proof in this section.  Let $\Lg \subset \sp(n)$ be a semisimple subalgebra with commutant $\LCg$.  It is easy to show that $\Lg\cap \,\LCg = 0$, so that $\Lg\oplus\,\LCg$ is a subalgebra of $\sp(n)$, and $\LCg$ is reductive, i.e.~a sum $\LCg = \LLh \oplus \Lu(1)^{\oplus A}$ of a semisimple factor $\LLh$ and an abelian factor.  Using Lemma~\ref{lem:simpseud2}, we decompose $\rep{2n}$ as
\begin{align}
\rep{2n} = \bigoplus_{i} (\rep{r^+_i}, \rep{R_i}) 
\oplus \bigoplus (\rep{r^-_p}, \rep{R_p}) \oplus 
\bigoplus_{q} (\rep{r^c_q},\rep{R_q})\oplus(\brep{r^c_q},\brep{R_q})~,
\end{align}
where $\rep{r^+_i}$, $\rep{r^-_p}$ and $\rep{r^c_q}$ denote distinct irreps of $\Lg$ with indicated reality properties.
Since $\tadj \sp(n) = \Sym^2\!\rep{2n}$, we find
\begin{align}
\tadj\sp(n) & \supset 
\bigoplus_{i} (\Sym^2\!\rep{r^+_i}, \Sym^2\!\rep{R_i})  \oplus
\bigoplus_p (\wedge^2 \rep{r^-_p}, \wedge^2\rep{R_p}) \oplus 
\bigoplus_q (\rep{r^c_q}\otimes\brep{r^c_q}, \rep{R_q}\otimes\brep{R_q}) \nonumber\\
& \supset 
\bigoplus_{i} (\rep{1}, \Sym^2\!\rep{R_i})  \oplus
\bigoplus_p (\rep{1}, \wedge^2\rep{R_p}) \oplus 
\bigoplus_q (\rep{1}, \rep{R_q}\otimes\brep{R_q})~.
\end{align}
By Lemma~\ref{lemma:crossing} every $\Lg$-singlet in $\tadj\sp(n)$ is obtained this way, and by assumption these $\Lg$ singlets are precisely the generators of $\LCg$.  Decomposing further into irreps of $\LLh$ as
\begin{align}
\rep{R_i} &= \bigoplus_\alpha \rep{\rho_{i\alpha}}~,&
\rep{R_p} & = \bigoplus_\sigma \rep{\rho_{p\sigma}}~,&
\rep{R_q} & = \bigoplus_\mu \rep{\rho_{q\mu}}~,
\end{align}
we obtain
\begin{align}
\label{eq:irreps}
\tadj\LLh \oplus\Lu(1)^A &  = 
\bigoplus_{i} \bigoplus_{\alpha} \Sym^2\!\rep{\rho_{i\alpha} } 
\oplus
\bigoplus_{p} \bigoplus_\sigma \wedge^2  \rep{\rho_{p\sigma}}
\oplus
\bigoplus_{q}  \bigoplus_{\mu} \rep{\rho_{q\mu}}\otimes\brep{\rho_{q\mu}} \nonumber\\
&\quad\oplus
\bigoplus_{i} \bigoplus_{\alpha>\beta} \rep{ \rho_{i\alpha}}\otimes\rep{\rho_{i\beta}}
\oplus
\bigoplus_{p} \bigoplus_{\sigma>\tau} \rep{\rho_{p\sigma}}\otimes\rep{\rho_{p\tau}}
\oplus\bigoplus_{q}  \bigoplus_{\mu\neq\nu} \rep{\rho_{q\mu}}\otimes\brep{\rho_{q\nu}}~~.
\end{align}
Decomposing $\LLh = \oplus_s \LLh_s$ into its simple summands, we observe that every summand in
\begin{align}
\label{eq:adjs}
\tadj\LLh = (\tadj\LLh_1,\rep{1},\ldots,\rep{1}) \oplus (\rep{1},\tadj\LLh_2,\ldots,\rep{1})\oplus\cdots\oplus(\rep{1},\ldots,\rep{1},\tadj\LLh_k)
\end{align}
must be contained in exactly one of the summands in~(\ref{eq:irreps}), in fact a summand on the first line of~(\ref{eq:irreps}).\footnote{  To see the latter, assume the contrary, {\it e.g.}~$(\tadj\LLh_1,\rep{1},\ldots,\rep{1}) \subset\rep{ \rho_{i\alpha}}\otimes\rep{\rho_{i\beta}}$ for $\alpha\neq \beta$.  This can only work if $\rep{\rho_{i\alpha}}$ or $\rep{\rho_{i\beta}}$ is non-trivial, but in that case $\Sym^2\!\rep{\rho_{i\alpha}}$ or $\Sym^2\!\rep{\rho_{i\beta}}$ will yield additional terms in the decomposition.  Similar reasoning excludes the other summands in the second line of~(\ref{eq:irreps}).}  The second line must be absent, i.e., $\rep{R_i}$, $\rep{R_p}$ and $\rep{R_q}$ must in fact be irreps of $\LLh$; otherwise the right-hand side of~(\ref{eq:irreps}) would necessarily contain extra non-trivial representations of $\LLh$.   For the same reason each simple factor $\LLh_s$ must act non-trivially on exactly one of $\rep{R_i}$, $\rep{R_p}$, $\rep{R_q}$, i.e.,
\begin{align}
\bigoplus_s \LLh_s = 
\bigoplus_i \LLh_i \oplus
\bigoplus_p \LLh_p \oplus
\bigoplus_q \LLh_q~,
\end{align}
with
\begin{align}
\tadj \LLh_i &= \Sym^2\!\rep{R_i}~,&
\tadj \LLh_p & = \wedge^2\rep{R_p}~,&
\Lu(1)^{\oplus A} \oplus \bigoplus_q \tadj \LLh_q  & = \bigoplus_q \rep{R_q}\otimes\brep{R_q}~.
\end{align}
We recognize the classical groups $\LLh_i = \sp(k_i)$, 
$\LLh_p = \so(l_p)$, and $\LLh_q = \su(m_q)$, with $\rep{R_i}$, $\rep{R_p}$,  and $\rep{R_q}$ the corresponding fundamental representations.  Moreover, the abelian factor $\Lu(1)^{\oplus A} = \oplus_q \Lu(1)_q$, and $\Lu(1)_q$ acts
with charge $+1$ on $\rep{R_q}$ and $-1$ on $\brep{R_q}$.  This completes the proof of Theorem~\ref{thm:fieldsmedal}.

\section{Discussion}
\label{sec:disc}

Having found that gauging a subalgebra of $\sp(n)$ does not yield $\su(m)$ factors without accompanying $\uu(1)$'s, we now turn to the question of whether it is possible to get such factors in some other way. In particular, we consider two possibilities: gauging discrete subgroups, as well as moving out on the Higgs branch. We will find that discrete gaugings do not yield $\su(m)$'s, whereas special loci on the baryonic branch of SQCD do. Of course, we also can not rule out the possibilities of emergent (accidental) symmetries yielding $\su(m)$ factors, and we will have nothing further to say about this possibility here. 

\subsection{Discrete gauge symmetries}
One way to decrease the global symmetry group $G$ is to introduce a further gauging by a discrete subgroup $\Gamma \subset G$.  The remaining global symmetry will be the centralizer of $\Gamma$ in $G$, $C_\Gamma$.  While this can have interesting consequences for the non-abelian part of the global symmetry, since $C_\Gamma$ will inevitably contain the center of $G$, it cannot affect the abelian $\oplus_q \Lu(1)_q$ component of the symmetry algebra.

\subsection{Higgs branch}

The moduli space of $\cN=2$ $\SU(N_c)$ SQCD with $N_f$ flavors was comprehensively analyzed in \cite{Argyres:1996eh}. In this work, the authors describe the remaining global symmetries on the various possible sub-branches of the Higgs branch.  When $N_c \leq N_f < 2N_c$ the remaining non-R global symmetry on the baryonic branch is $\SU(2N_c - N_f) \times \GU(1)^{N_f - N_c}$. When $N_f = N_c$, the $\GU(1)$ factors are spontaneously broken, and the global symmetry is simply $\SU(N_f)$. Moreover, even when $N_f > N_c$, the $\GU(1)$ factors do not enhance $\SU(2N_c - N_f)$ to $\GU(2N_c - N_f)$.  Thus it is possible to get non-enhanced $\SU(m)$ factors on the Higgs branch of $\cN=2$ theories.

\subsection{General discussion and conclusions}
 Let us now comment on some special cases of interest, in particular those of the low-rank $T_N$ theories. The first non-trivial case is the $T_2$ theory. This has a na\"{i}ve global symmetry algebra $\su(2)^{\oplus 3}$ and is known to be equivalent to a free theory of $8$ chiral multiplets transforming in the tri-fundamental representation of $\su(2)^{\oplus 3}$.   From the perspective of the analysis in section~\ref{sec:free} it is clear that the global symmetry algebra is $\sp(4)$, and under $\sp(4) \supset \su(2)^{\oplus 3}$ the matter decomposes as $\rep{8} = (\rep{2},\rep{2},\rep{2})$.
 
The $T_3$ theory has a similar structure. Na\"{i}vely this theory has a global symmetry algebra $\su(3)^{\oplus 3}$ with chiral multiplets transforming in the tri-fundamental $(\rep{3},\rep{3},\rep{3})$. In fact, it is enhanced to $\eu{e}_6$~\cite{Gaiotto:2009we}, and the representation theory works out nicely: there is a maximal embedding $\su(3)^{\oplus 3}\subset \mathfrak{e}_6$ under which
\begin{align}
\rep{78} =  (\rep{8},\rep{1},\rep{1})\oplus(\rep{1},\rep{8},\rep{1})\oplus(\rep{1},\rep{1},\rep{8})\oplus (\rep{3},\rep{3},\rep{3})\oplus (\brep{3},\brep{3},\brep{3})~.
\end{align}
In other words, the trifundamental fields are additional global currents that enhance the naive $\su(3)^{\oplus 3}$ to $\Le_6$.

Finally, consider the $T_4$ theory with its global symmetry algebra $\su(4)^{\oplus 3}\cong \so(6)^{\oplus 3}$ and matter in $(\rep{4},\rep{4},\rep{4})$.  At first glance one might hope that here a simple weakly-coupled UV Lagrangian is not ruled out by our results, since of course we can easily construct an $\so(6)^{\oplus 3}$ symmetry algebra.  Alas, the hope is short-lived---in a theory so obtained the matter would transform in $\rep{6}$ for each of the $\so(6)$ factors, and no tensor product could produce the desired $\rep{4}$ spinor representations.

We now conclude with a few brief comments. Although it is too strong to say that we have proven that  $T_N$ theories do not arise via gauging the symmetries of free hypermultiplets, we have ruled out the simplest realizations that do not explore the Higgs branch of the $\cN=2$ gauge theory.  Consider moving out onto the Higgs branch by giving a field a vev $v$, and let the strong coupling scale of the UV gauge theory be denoted by $\Lambda$. If $v \gg \Lambda$, the IR gauge-neutral degrees of freedom, whose vevs parametrize the flat directions, will decouple from the IR gauge sector.  The symmetries of the IR gauge theory will then again be constrained by Theorem~\ref{thm:fieldsmedal}.  If, on the other hand, $v\sim \Lambda$, then the dynamics is necessarily strongly coupled and outside of the domain of validity of our results.  

Of course a Lagrangian realization for $T_N$ has long been suspected to be highly unlikely, in light of the poorly understood dynamics of the M5-brane origin of such theories; for example, the $N^3$ scaling of the number of degrees of freedom in these systems does not seem to have any obvious gauge theory realization. Moreover, the $T_N$ theories have no marginal deformations, so they do not seem to arise as SCFTs in the same way as $N_f = 2N_c$ gauge theories, which have an exactly marginal gauge coupling. 

However, even aside from possible applications to strongly coupled theories, our main result indicates just how strongly constrained the global symmetries of $\cN=2$ gauge theories are and will perhaps provide a useful step towards a classification of such theories.  For instance, by combining our results with the recent work~\cite{Bhardwaj:2013qia}, it should be easy to give a comprehensive list of all possible symmetry algebras of conformal and asymptotically free theories.  It would be interesting to extend that to include possible discrete gaugings. It may perhaps also be useful to extend our results to $\cN=1$ theories as well, though there we expect important new complications from possible superpotential interactions.

\newpage

\begin{center}
\bf{Acknowledgements}
\end{center}
\medskip
We would like to thank Ibrahima Bah, Jacques Distler, Ken Intriligator, and David Tong for useful discussions. IVM is supported by the NSF Focused Research Grant DMS-1159404  and Texas A\&M. BW is supported in part by the STFC Standard Grant ST/J000469/1 ``String Theory, Gauge Theory and Duality."  IVM and BW would like to thank the organizers of SMUK'13, where this collaboration began.  JMO and BW would like to thank the Albert Einstein Institute for hospitality while this work was undertaken, and JMO would like to thank Texas A\&M for hospitality while this work was being completed.

\vspace*{0.2in}


\providecommand{\href}[2]{#2}\begingroup\raggedright\endgroup
\end{document}